\newtheorem{theorem}{Theorem}
\def \t {\widetilde}
\def \h {\widehat}
\title{Bilinearisation-reduction approach to the nonlocal discrete nonlinear Schr\"{o}dinger equations}
\author{Xiao Deng$^{1}$,~ Senyue Lou$^{2}$,~ Da-jun Zhang$^{1}$\footnote{Corresponding author. Email: djzhang@staff.shu.edu.cn, djzhang@shu.edu.cn} \\
$^{1}$Department of Mathematics, Shanghai University, Shanghai 200444, P.R. China\\
$^{2}$Faculty of Science, Ningbo University, Ningbo 315211, P.R. China}
\date{\today}
\begin{document}
\maketitle

\begin{abstract}
  A bilinearisation-reduction approach is described for finding solutions for
  nonlocal integrable systems and is illustrated with nonlocal discrete nonlinear Schr\"{o}dinger equations.
  In this approach we first bilinearise the coupled system before reduction
  and derive its double Casoratian solutions; then we impose reduction on double Casoratians
  so that they coincide with the nonlocal reduction on potentials.
  Double Caosratian solutions of the classical and nonlocal (reverse space,
  reverse time and reverse space-time)  discrete nonlinear Schr\"{o}dinger equations
  are presented.

\begin{description}
\item[MSC:] 35Q51, 35Q55
\item[Keywords:]
nonlocal discrete nonlinear Schr\"{o}dinger equation, bilinear, reduction, double Casoratian solutions
\end{description}
\end{abstract}

\section{Introduction}\label{sec-1}

In recent years the research of nonlocal integrable systems has become a hot topic since  Ablowitz and Musslimani
proposed continuous and discrete nonlocal nonlinear Schr\"{o}dinger (NLS) equations in \cite{AM-PRL-2013,AM-PRE-2014}.
Such systems that are obtained from nonlocal reductions of  integrable  systems
are potentially used to describe wave propagation in nonlinear PT symmetric media \cite{samra-2014}.
Classical solving methods have been successfully applied to nonlocal models (e.g.\cite{AM-Nonl-2016,Zhou-arxiv}).
Recently, a reduction approach was proposed \cite{ChenZh-AML-2017} and solutions of the nonlocal NLS hierarchy
were given in terms of double Wronskian.
In that approach the known double Wronskian solutions of the  coupled  system before reduction
are directly used and by suitable reduction these double Wronskians
yield solutions to the after-reduction nonlocal models.

In general, to solve a nonlocal integrable model, one can first bilinearise a coupled before-reduction system
and write out its double Wronskian (or Casoratian) solutions;
then impose reduction on the double Wronskians (or Casoratians) so that they provide solutions of
the after-reduction equations.
We believe such a bilinearisation-reduction approach  is universal for nonlocal integrable systems.
As a support of this argument in this paper we deal with the nonlocal discrete NLS equations.
Compared with the previous bilinear treatment directly on the nonlocal discrete NLS equation in \cite{zhu-2016},
we will see that our bilinearisation-reduction approach has more advantage in calculation and presenting solutions.

The paper is organized as follows.
In Sec.\ref{sec-2} we introduce the before-reduction coupled system
and after-reduction classical and nonlocal discrete NLS equations.
Sec.\ref{sec-3} provides an illustration of our approach, including bilinearisation of the before-reduction coupled system,
deriving its double Casoratian solutions, and showing the reduction procedure and results.
Sec.\ref{sec-4} is for conclusions.

\section{Nonlocal discrete NLS equations}\label{sec-2}

The integrable discrete NLS equation is a typical and physically important discrete model (see \cite{Abl-PT-book-2004} and the references therein).
It is also known as the Ablowitz-Ladik equation, which is related to the  Ablowitz-Ladik (AL) spectral problem
\begin{equation}\label{AL-sp}
\Theta_{n+1} =M_n  \Theta_n,~~
          M_n=\left(
           \begin{array}{cc}
             \lambda & Q_n \\
             R_n &  1/\lambda \\
           \end{array}
         \right),~~\Theta_n=\left(
           \begin{array}{cc}
             \theta_{1,n} \\
             \theta_{2,n} \\
           \end{array}
         \right),
\end{equation}
where $(Q_n, R_n)=(Q(n,t), R(n,t))$ are potential functions on $\mathbb{Z} \times \mathbb{R}$,
$\lambda$ is a spectral parameter, and $(\theta_{1,n}, \theta_{2,n})$ are wave functions.
The before-reduction coupled system reads
\begin{subequations}\label{nls-bef}
\begin{align}
& i Q_{n,t}=Q_{n+1}+Q_{n-1}-2Q_n -Q_nR_n(Q_{n+1}+Q_{n-1}),\\
& i R_{n,t}=-(R_{n+1}+R_{n-1}-2R_n) + Q_nR_n(R_{n+1}+R_{n-1}).
\end{align}
\end{subequations}
It admits the following reductions (cf.\cite{AM-PRE-2014,AM-SAM-2016})
\begin{equation}
iQ_{n,t}=(Q_{n+1}+Q_{n-1}-2Q_n)-\delta Q_nQ_n^*(Q_{n+1}+Q_{n-1}),~~~~R_n=\delta Q_n^*,
\label{nls-cla}
\end{equation}
\begin{equation}
iQ_{n,t}=(Q_{n+1}+Q_{n-1}-2Q_n) -\delta Q_nQ_{-n}^*(Q_{n+1}+Q_{n-1}),~~~~R_n=\delta Q_{-n}^*,
\label{snls}
\end{equation}
\begin{equation}
iQ_{n,t}=(Q_{n+1}+Q_{n-1}-2Q_n)-\delta Q_nQ_{n}(-t)(Q_{n+1}+Q_{n-1}),~~~~R_n=\delta Q_n(-t),
\label{tnls}
\end{equation}
\begin{equation}
iQ_{n,t}=(Q_{n+1}+Q_{n-1}-2Q_n)-\delta Q_nQ_{-n}(-t)(Q_{n+1}+Q_{n-1}),~~~~R_n=\delta Q_{-n}(-t),
\label{tsnls}
\end{equation}
where $*$ means complex conjugate, $\delta=\pm1$, $Q_{-n}=Q(-n,t)$, $Q_n(-t)=Q(n,-t)$ and $Q_{-n}(-t)=Q(-n,-t)$ indicate
reversed space, time and space-time, respectively.

\section{Bilinearisation-reduction approach}\label{sec-3}

\subsection{Bilinearisation of the before-reduction system}\label{sec-3-1}

In the first step of our approach we bilinearise the before-reduction coupled system \eqref{nls-bef}
rather than those reduced models \eqref{nls-cla}-\eqref{tsnls}.
Through  dependent variables transformation
\begin{eqnarray}\label{trans}
Q_n=\frac{g_n}{f_n},~~R_n=\frac{h_n}{f_n},
\end{eqnarray}
\eqref{nls-bef} is transformed into the following bilinear form
\begin{subequations}\label{nls-bi}
\begin{eqnarray}
  &&iD_{t}g_n\cdot f_n =g_{n+1}f_{n-1}+g_{n-1}f_{n+1}-2g_{n}f_{n}, \label{nls-bi-1}  \\
  &&iD_{t}f_n\cdot h_n =f_{n+1}h_{n-1}+f_{n-1}h_{n+1}-2f_{n}h_{n}, \label{nls-bi-2}  \\
  &&f_n^2-f_{n-1}f_{n+1}=g_nh_n,\label{nls-bi-3}
\end{eqnarray}
\end{subequations}
where $D_t$ is the well known Hirota bilinear operator defined as \cite{Hirota-1974}
\[
D_t^mf\cdot g=(\partial_t-\partial_{t'})^mf(t) g(t')|_{t'=t}.
\]

Let us introduce some notations.
Consider matrix equations
 \begin{subequations}\label{nls-phi-psi}
\begin{align}
&E\Phi_n=A\Phi_n,~~ i\Phi_{n,t}=\frac{1}{2} (E^2-2+E^{-2})\Phi_n,\\
&E^{-1}\Psi_n=A\Psi_n, ~~ i\Psi_{n,t}=-\frac {1}{2} (E^2-2+E^{-2})\Psi_n,
\end{align}
\end{subequations}
where $A\in \mathbb{C}_{(m+p+2)\times(m+p+2)}$ is invertible,
$E$ is a shift operator defined by $E^j f_n=f_{n+j}$, and
\[
\Phi_n=(\phi_{1,n},\phi_{2,n},\cdots,\phi_{m+p+2,n})^T,~~\Psi_n=(\psi_{1,n},\psi_{2,n},\cdots,\psi_{m+p+2,n})^T
\]
are $(m+p+2)$-th order vectors in which $\phi_{j,n}=\phi_j(n,t)$ and $\psi_{j,n}=\psi_j(n,t)$.
In this paper we make use of the following double Casoratian with respect to DOUBLE shifts of $n$,
 \begin{align}
 \mathrm{Cas}^{(m+1,p+1)}(\Phi_n,\Psi_n)& =|\Phi_n,E^2\Phi_n,\cdots,E^{2m}\Phi_n;
 \Psi_n,E^2\Psi_n,\cdots,E^{2p}\Psi_n|=|\h {\Phi_n^{(m)}};\h {\Psi_n^{(p)}}|\nonumber\\
 & =|0,1,\cdots,m;0,1,\cdots,p|=|\h m;\h p|,
 \end{align}
where the shorthand notation ``hat" follows Ref.\cite{Nimmo-NLS-83}.

As for solutions to the before-reduction system \eqref{nls-bef} and its bilinear form  \eqref{nls-bi}, we have the following result:

\begin{theorem}\label{the3}
The bilinear system \eqref{nls-bi} has  double Casoratian solutions
\begin{equation}
 f_n= |\h m; \h p|, ~~ g_n= |\h {m+1}; \h {p-1}|, ~~  h_n= -|\h {m-1}; \h {p+1}|,
 \label{fgh}
 \end{equation}
 where their entries $\Phi_n$ and $\Psi_n$ satisfy \eqref{nls-phi-psi}.
 $A$ and its any similar matrix lead to same solutions to $Q_n$ and $R_n$ through \eqref{trans}.
 \end{theorem}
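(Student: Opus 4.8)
The plan is to verify the three bilinear equations \eqref{nls-bi-1}--\eqref{nls-bi-3} directly by substituting the ansatz \eqref{fgh} and exploiting the two defining properties of the column vectors: the first-order spatial recursions $E\Phi_n=A\Phi_n$, $E^{-1}\Psi_n=A\Psi_n$, and the linear time evolutions in \eqref{nls-phi-psi}. The spatial recursions let me rewrite any integer shift of a column as a fixed power of $A$ acting on the base column, so that a spatial shift of a whole Casoratian becomes a Casoratian with a translated column-index block; in particular the single shift $E^{\pm1}$ converts the even-shift blocks defining $f_n$ into odd-shift blocks. The time evolution lets me differentiate a Casoratian one column at a time. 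Together these reduce all three equations to determinantal identities among Casoratians that differ only in their column sets, which I verify through the classical Casoratian machinery of Laplace expansion and the Pl\"ucker (Jacobi) relations.

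I would attack the algebraic identity \eqref{nls-bi-3} first, since it carries no time derivative. Using the odd/even-shift conversion, $f_n^2$, $f_{n-1}f_{n+1}$ and $g_nh_n$ all become products of maximal minors drawn from the same extended family of columns $\{E^{2j}\Phi_n\}$ and $\{E^{2j}\Psi_n\}$. The identity $f_n^2-f_{n-1}f_{n+1}=g_nh_n$ is then an instance of a three-term Pl\"ucker relation: I would exhibit a single auxiliary $(m+p+2)$-row column array whose maximal-minor pairings reproduce exactly these three products, the relation $E\Phi_n=A\Phi_n$ being what collapses two superficially different minors to the common value $f_n$ and so produces the perfect square. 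The genuine work here is choosing the bordering columns of the auxiliary array so that the signs and the one-step extension/truncation of the $\Phi$- and $\Psi$-blocks match \eqref{fgh}.

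For the dynamical equations it suffices to treat \eqref{nls-bi-1}, since \eqref{nls-bi-2} follows by interchanging the roles of $\Phi$ and $\Psi$ and flipping the sign of the time evolution. Expanding $iD_tg_n\cdot f_n=i(g_{n,t}f_n-g_nf_{n,t})$, I compute each derivative from $i\Phi_{n,t}=\tfrac12(E^2-2+E^{-2})\Phi_n$ and its $\Psi$-counterpart: differentiating one column replaces it by $\tfrac{\pm1}{2}(E^2-2+E^{-2})$ of itself. The $-2$ (diagonal) part merely rescales the determinant, so summing these contributions over all columns gives a constant multiple of $g_nf_n$; a quick column count gives coefficient $-2$, matching the $-2g_nf_n$ term on the right. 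The $E^{\pm2}$ parts shift one column up or down by two, and whenever the shifted index already labels another column the determinant vanishes, so only the boundary columns of each consecutive block survive. Re-identifying the surviving boundary Casoratians with $g_{n\pm1}$ and $f_{n\pm1}$ via the odd/even-shift conversion, and recombining them through one more Pl\"ucker relation, should reproduce $g_{n+1}f_{n-1}+g_{n-1}f_{n+1}$ and close the verification.

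The remaining claim, invariance under the similarity $A\mapsto PAP^{-1}$, is immediate: the transformed system is solved by $P\Phi_n$ and $P\Psi_n$, because the shift operators in \eqref{nls-phi-psi} commute with the constant matrix $P$, so every Casoratian in \eqref{fgh} acquires the common scalar $\det P$, which cancels in the ratios $Q_n=g_n/f_n$ and $R_n=h_n/f_n$ of \eqref{trans}. The main obstacle I anticipate is not any single step but the combinatorial bookkeeping of the third paragraph: determining which boundary determinants survive, fixing their signs, and --- because the $\Phi$-block shifts \emph{up} while the $\Psi$-block shifts \emph{down} through $E^{-1}$ --- correctly matching these asymmetric up-shift and down-shift boundary terms to the appropriately shifted $g$, $h$ and $f$. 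Arranging the auxiliary arrays so that a Pl\"ucker relation simultaneously fits the algebraic and the dynamical equations is the delicate part; everything else is routine multilinear algebra.
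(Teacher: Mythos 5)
Your proposal is correct and follows essentially the same route as the paper's appendix: direct substitution, use of the shift relations $E\Phi_n=A\Phi_n$, $E^{-1}\Psi_n=A\Psi_n$ to re-express $f_{n\pm1},g_{n\pm1}$ (picking up $|A|^{\pm1}$ factors), column-by-column time differentiation with only boundary columns surviving, reduction to the three-term Pl\"ucker identity \eqref{id-1}, and the conjugation argument $\Phi_n\mapsto V\Phi_n$ for the similarity claim. No gaps beyond the bookkeeping you already flag.
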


A brief proof will be sketched in Appendix.

\subsection{Reduction of double Casoratians}\label{sec-3-2}

From now on we consider double Casoratians \eqref{fgh} with $p=m$ and
\begin{eqnarray}\label{phi-psi-A}
\Phi_n=A^ne^{-\frac{i}{2}(A^2-2I_{2(m+1)}+A^{-2})t}C,&\Psi_n=A^{-n}e^{\frac{i}{2}(A^2-2I_{2(m+1)}+A^{-2})t}D,
\end{eqnarray}
which are solutions of \eqref{nls-phi-psi} where $C,D\in \mathbb{C}_{2m+2}$ and $I_s$ is the $s$-th order identity matrix.
The idea of reduction is to impose constraint on $\Phi_n$, $\Psi_n$  so that
$Q_n$ and $R_n$ given by \eqref{trans} can obey the reduction relations in \eqref{nls-cla}-\eqref{tsnls}.

\subsubsection{Discrete NLS equation \eqref{nls-cla}}\label{sec-3-2-1}

We would like to take \eqref{nls-cla} as an example to explain in detail the idea of reduction on double Casoratians.

Consider constraint
\begin{equation}\label{nls-cons}
 \Psi_n = T\Phi^*_n,
\end{equation}
where $T\in \mathbb{C}_{(2m+2)\times (2m+2)}$ is a transform matrix.
It then follows from \eqref{phi-psi-A} and \eqref{nls-cons} that
\begin{align}
  \Psi_n =&A^{-n}e^{\frac{i}{2}(A^2-2I_{2(m+1)}+A^{-2})t}D \nonumber\\
  =&T\Phi_n^*=T A^{*n}e^{\frac{i}{2}(A^{*2}-2I_{2(m+1)}+A^{*-2})t}C^*  \nonumber\\
  =& (TA^*T^{-1})^ne^{\frac{i}{2}((TA^*T^{-1})^2-2I_{2(m+1)}+(TA^*T^{-1})^{-2})t}TC^*,
\end{align}
which requires that $A, T, D, C$ obey the relation
\begin{equation}\label{nls-AT}
A^{-1}=TA^* T^{-1}, ~~D= TC^*
\end{equation}
so that \eqref{nls-cons} holds.
Thus, under \eqref{nls-cons} and \eqref{nls-AT} we can rewrite \eqref{fgh} as
\begin{subequations}\label{nls-fgh}
\begin{eqnarray}
&& f_n=|\h {\Phi_n^{(m)}};\h {\Psi_n^{(m)}}| =
|\h {\Phi_n^{(m)}};\h {T\Phi_n^{*(m)}}|,  \\
&& g_n=|\h {\Phi_n^{(m+1)}};\h {\Psi_n^{(m-1)}}| =
|\h {\Phi_n^{(m+1)}};\widehat{T\Phi_n^{*(m-1)}}|,  \\
&& h_n=-|\h {\Phi_n^{(m-1)}};\h {\Psi_n^{(m+1)}}| =
-|\h {\Phi_n^{(m-1)}};\h {T\Phi_n^{*(m+1)}}|.
\end{eqnarray}
\end{subequations}
Now we introduce another constraint $TT^*=\delta I_{2(m+1)}$ where $\delta=\pm 1$,
by which, making use of determinantal calculation, we find
\begin{eqnarray*}
f_n&=& |\h {\Phi_n^{(m)}};\h {T\Phi_n^{*(m)}}| = |T||\h {\delta T^*\Phi_n^{(m)}};\h {\Phi_n^{*(m)}}|     \\
&=&(-\delta)^{m+1}|T||\h {\Phi_n^{*(m)}};\h {T^*\Phi_n^{(m)}}|  =(-\delta)^{m+1}|T||\h {\Phi_n^{(m)}};\h {T\Phi_n^{*(m)}}|^*  \\
&=&(-\delta)^{m+1}|T|f_n^*,
\end{eqnarray*}
and similarly, $h_n=-(-\delta)^{m}|T|g_n^*$.
Thus we immediately reach
\begin{equation}
\frac{R_n}{Q_n^*}=\frac{h_n/f_n}{g_n^*/f_n^*}=
\frac{h_n}{g_n^*}\cdot\frac{f_n^*}{f_n} =\delta,
\end{equation}
i.e. $R_n=\delta Q_{n}^*$, which is the reduction to get  eq.\eqref{nls-cla} from \eqref{nls-bef}.

In conclusion we have the following.

\begin{theorem}\label{the-nls-cla}
The discrete NLS equation \eqref{nls-cla} has  solutions
\begin{equation}\label{the-nls-Q}
 Q_n= \frac{g_n}{f_n},~~ f_n=|\h {\Phi_n^{(m)}};\h {T\Phi_n^{*(m)}}|,~~
 g_n=|\h {\Phi_n^{(m+1)}};\h {T\Phi_n^{*(m-1)}}|,
\end{equation}
where $\Phi_n$ is given as in \eqref{phi-psi-A}, and $A$ and $T$ obey the relation
\begin{equation}\label{the-nls-AT}
A^{-1}T-TA^*=0,~~ TT^*=\delta I_{2(m+1)}.
\end{equation}
$\Phi_n$  in \eqref{phi-psi-A} has an equivalent form by replacing $A$ with $e^B, ~ B\in \mathbb{C}_{(2m+2)\times (2m+2)}$,
i.e.
\begin{equation}\label{phi-B-nls}
\Phi_n=e^{nB -\frac{i}{2}(e^{2B}-2 I_{2(m+1)}+e^{-2B})t}C.
\end{equation}
Correspondingly, instead of \eqref{the-nls-AT} we need
\begin{equation}\label{the-nls-BT}
BT+TB^*=0,~~ TT^*=\delta  I_{2(m+1)}.
\end{equation}

\end{theorem}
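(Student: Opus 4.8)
The plan is to assemble the statement from Theorem~\ref{the3} together with the reduction computation carried out just above it, and then to verify the $B$-reformulation separately. First I would invoke Theorem~\ref{the3}: for any invertible $A$ and vectors $C,D$, the entries \eqref{phi-psi-A} (specialised to $p=m$) solve the linear system \eqref{nls-phi-psi}, so the double Casoratians \eqref{fgh} solve the bilinear system \eqref{nls-bi}, and hence $Q_n=g_n/f_n$, $R_n=h_n/f_n$ solve the before-reduction coupled system \eqref{nls-bef}. The whole task is therefore to choose $A,T,C,D$ so that the extra relation $R_n=\delta Q_n^*$ holds, because precisely this relation collapses \eqref{nls-bef} to the discrete NLS equation \eqref{nls-cla}.

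Second, I would impose the constraint \eqref{nls-cons}, $\Psi_n=T\Phi_n^*$, and check its compatibility with the explicit solutions \eqref{phi-psi-A}. Substituting the exponential forms and comparing the $n$- and $t$-dependence shows that \eqref{nls-cons} holds for all $n,t$ if and only if $A^{-1}=TA^*T^{-1}$ and $D=TC^*$; the first of these is the relation $A^{-1}T-TA^*=0$ in \eqref{the-nls-AT}. Under this constraint $\Psi_n$ is eliminated in favour of $T\Phi_n^*$, rewriting $f_n,g_n,h_n$ as in \eqref{nls-fgh}.

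Third comes the heart of the argument, the determinantal reduction. Adding the hypothesis $TT^*=\delta I_{2(m+1)}$ (so that $T^{-1}=\delta T^*$), I would factor $T$ out of the second Casoratian block, replace $T^{-1}$ by $\delta T^*$ in the first block, interchange the two blocks of $m+1$ columns, and recognise the complex conjugate of the original determinant. Tracking the parities carefully --- $\delta^{m+1}$ from pulling $\delta$ out of $m+1$ columns and $(-1)^{(m+1)^2}=(-1)^{m+1}$ from the block swap --- yields $f_n=(-\delta)^{m+1}|T|f_n^*$, and the analogous manipulation gives $h_n=-(-\delta)^m|T|g_n^*$. Forming the ratio $R_n/Q_n^*=(h_n/g_n^*)(f_n^*/f_n)$, the factor $|T|$ cancels and all but one power of $-\delta$ cancels, leaving exactly $\delta$; hence $R_n=\delta Q_n^*$ and $Q_n=g_n/f_n$ solves \eqref{nls-cla}. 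I expect this sign- and parity-bookkeeping to be the main obstacle: one must account for every column reordering and every extraction of $\delta$ or $T$, and confirm that the surviving scalar is the real number $\delta$ rather than some unit-modulus remnant of $|T|$.

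Finally I would establish the $B$-reformulation. Writing $A=e^B$ gives $A^n=e^{nB}$ and $A^{\pm2}=e^{\pm2B}$, so \eqref{phi-psi-A} turns into \eqref{phi-B-nls}. For the constraints, $BT+TB^*=0$ means $BT=-TB^*$, whence $B^kT=(-1)^kT(B^*)^k$ for every $k$; summing the exponential series and using $(e^B)^*=e^{B^*}$ gives $e^{-B}T=Te^{B^*}$, i.e. $A^{-1}T=TA^*$. Thus \eqref{the-nls-BT} implies the first relation of \eqref{the-nls-AT}, while $TT^*=\delta I_{2(m+1)}$ is common to both, and the solution formula \eqref{the-nls-Q} carries over verbatim.
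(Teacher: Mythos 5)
Your proposal is correct and follows essentially the same route as the paper: impose $\Psi_n=T\Phi_n^*$, derive $A^{-1}=TA^*T^{-1}$, $D=TC^*$ from the exponential forms, and use $TT^*=\delta I_{2(m+1)}$ in the determinantal manipulation to get $f_n=(-\delta)^{m+1}|T|f_n^*$ and $h_n=-(-\delta)^m|T|g_n^*$, whence $R_n=\delta Q_n^*$. Your sign bookkeeping and your explicit verification of the $B$-form (which the paper states without proof) are both accurate.
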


Here we remark that for the continuous focusing NLS equation $iq_t=q_{xx}+|q|^2q$ where $|q|^2=qq^*$, its double Wronskian solution
was given in \cite{Nimmo-NLS-83} in 1983,
while for the discrete NLS equation \eqref{nls-cla},
although there were many discussions on its solutions \cite{Maruno-2006,Maruno-2008,OhtY2014},
surprisingly, it seems there is no explicit double Casoratian form that was presented as its solutions.
Besides, \eqref{the-nls-Q} does not provide a solution with nonzero background for equation \eqref{nls-cla} with $\delta=1$,
i.e. defocusing discrete NLS equation.

\subsubsection{Nonlocal cases}\label{sec-3-2-2}

\begin{theorem}\label{the-snls}
For the nonlocal discrete NLS equation \eqref{snls} with reverse space, its solution is given by
\begin{equation}\label{the-snls-Q}
 Q_n=\frac{g_{n}}{f_{n}},
 ~~
 f_n=|\h {A^{-m}\Phi_n^{(m)}};\h {A^{m}T\Phi_{-n}^{*(-m)}}|,~~
 g_{n}=|\h {A^{-m}\Phi_n^{(m+1)}};\h {A^{m}T\Phi_{-n}^{*(-m+1)}}|,
\end{equation}
where $\Phi_n$ is given as in \eqref{phi-psi-A} or equivalently \eqref{phi-B-nls},
and $A$ and $T$ obey the relation
\begin{equation}\label{the-snls-AT}
AT-TA^*=0, ~~TT^*=-\delta|A^*|^2 I_{2(m+1)},
\end{equation}
or equivalently
\begin{equation}\label{the-snls-BT}
BT-TB^*=0, ~~TT^*=-\delta|e^{B^*}|^2 I_{2(m+1)}.
\end{equation}
Here the notation $\h {A^{m}T\Phi_{-n}^{*(-m)}}$ stands for consecutive columns
$(A^{m}T\Phi_{-n}^{*}, A^{m-2}T\Phi_{-n}^{*}, \cdots, A^{-m}T\Phi_{-n}^{*})$.

For the nonlocal discrete NLS equation \eqref{tnls} with reverse time, its solution is given by
\begin{equation}\label{the-tnls-Q}
 Q_n=  \frac{g_n}{f_n},~~ f_n=|\h {\Phi_n^{(m)}};\h {T\Phi_n^{(m)}(-t)}|,~~
 g_n=|\h {\Phi_n^{(m+1)}};\h {T\Phi_n^{(m-1)}(-t)}|,
\end{equation}
where $\Phi_n$ is given as in \eqref{phi-psi-A} or equivalently \eqref{phi-B-nls},
and $A$ and $T$ obey the relation
\begin{equation}\label{the-tnls-AT}
A^{-1}T-TA=0, ~~ T^2=\delta I_{2(m+1)},
\end{equation}
or equivalently
\begin{equation}\label{the-tnls-BT}
BT+TB=0, ~~T^2= \delta I_{2(m+1)}.
\end{equation}

For the nonlocal discrete NLS equation \eqref{tsnls} with reverse space and time, its solution is given by
\begin{equation}\label{the-tsnls-Q}
  Q_n=\frac{g_{n}}{f_{n}},~~
 f_n=|\h {A^{-m}\Phi_n^{(m)}};\h {A^m T\Phi_{-n}^{(-m)}(-t)}|,
 ~~ g_n= |\h {A^{-m}\Phi_n^{(m+1)}};\h {A^m T\Phi_{-n}^{(-m+1)}(-t)}|,
\end{equation}
where $\Phi_n$ is given as in \eqref{phi-psi-A} or equivalently \eqref{phi-B-nls},
and $A$ and $T$ obey the relation
\begin{equation}\label{the-tsnls-AT}
AT-TA=0,~~T^2=-\delta|A|^2 I_{2(m+1)},
\end{equation}
or equivalently
\begin{equation}\label{the-tsnls-BT}
BT-TB=0, ~~T^2=-\delta|e^B|^2 I_{2(m+1)}.
\end{equation}

\end{theorem}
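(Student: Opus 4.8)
The plan is to treat all three nonlocal reductions in exact parallel with the classical case of Theorem \ref{the-nls-cla}, the only change being which symmetry relates $\Psi_n$ to $\Phi_n$. For the reverse-space reduction $R_n=\delta Q_{-n}^*$ I would impose the constraint $\Psi_n=T\Phi_{-n}^*$; for the reverse-time reduction $R_n=\delta Q_n(-t)$ the constraint $\Psi_n=T\Phi_n(-t)$; and for the reverse-space-time reduction $R_n=\delta Q_{-n}(-t)$ the constraint $\Psi_n=T\Phi_{-n}(-t)$, where $T$ is a constant transform matrix to be fixed. The guiding principle, as in the classical proof, is that the chosen constraint must (i) be compatible with the shift/evolution equations \eqref{nls-phi-psi} that $\Phi_n,\Psi_n$ satisfy, which pins down the commutation relation between $A$ and $T$, and (ii) force the reduction relation between $h_n/f_n$ and the appropriately transformed $g_n/f_n$, which pins down the normalisation of $T$.

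Step one is to derive the first relation in each of \eqref{the-snls-AT}, \eqref{the-tnls-AT}, \eqref{the-tsnls-AT}. Substituting the explicit form \eqref{phi-psi-A} into the constraint, I would rewrite, for instance in the reverse-space case, $T\Phi_{-n}^*=(TA^*T^{-1})^{-n}e^{\frac{i}{2}((TA^*T^{-1})^2-2I+(TA^*T^{-1})^{-2})t}TC^*$ and compare with $\Psi_n=A^{-n}e^{\frac{i}{2}(A^2-2I+A^{-2})t}D$; matching gives $TA^*T^{-1}=A$ together with $D=TC^*$, i.e. $AT-TA^*=0$. The reverse-time and reverse-space-time constraints are handled identically and yield $A^{-1}T-TA=0,\ D=TC$ and $AT-TA=0,\ D=TC$ respectively, the absence of complex conjugation reflecting the fact that those two reductions carry none. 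The equivalent $B$-forms \eqref{the-snls-BT}, \eqref{the-tnls-BT}, \eqref{the-tsnls-BT} then follow by writing $A=e^B$ and taking logarithms of these commutation relations, turning $A^{\pm1}=TA^{(*)}T^{-1}$ into the linear relations $\pm B=TB^{(*)}T^{-1}$.

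Step two is the determinantal reduction. After the substitution, $f_n,g_n,h_n$ in \eqref{fgh} are expressed through $\Phi_n$ alone; to make the spatial reflection $n\to-n$ act cleanly I would re-centre the Casoratians by the left multipliers $A^{-m}$ and $A^m$ on the two column blocks, which is precisely the origin of the prefactors in \eqref{the-snls-Q} and \eqref{the-tsnls-Q} (they are absent in the reverse-time form \eqref{the-tnls-Q} exactly because no reflection occurs there). Then, mimicking the classical chain $f_n=(-\delta)^{m+1}|T|f_n^*$, I would factor $|T|$ out, reverse the order of the two blocks of size $m+1$ to pick up a sign $(-1)^{(m+1)^2}$, and apply the relevant reflection and/or conjugation; the norm condition — $TT^*=-\delta|A^*|^2I$, $T^2=\delta I$, or $T^2=-\delta|A|^2I$ — is exactly what collapses the resulting constant so that $f_n$ is proportional to $f_{-n}^*$ (resp. $f_n(-t)$, $f_{-n}(-t)$) and $h_n$ to the corresponding transform of $g_n$, with compatible constants. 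Dividing then yields $h_n/f_n=\delta\,(g/f)^{\mathrm{transform}}$, i.e. the reduction defining \eqref{snls}, \eqref{tnls}, \eqref{tsnls}.

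The main obstacle I anticipate lies in the determinant bookkeeping for the two reflection cases \eqref{snls} and \eqref{tsnls}: unlike the classical and reverse-time cases, the substitution forces a comparison between a Casoratian built on the shifts $E^{2j}\Phi_n$ and one built on $E^{2j}\Phi_{-n}$, whose shifts run in opposite directions. Verifying that the $A^{\pm m}$ re-centring makes the two blocks match column-by-column after reflection, and that the accompanying sign and $|A|$-factors are precisely absorbed by the norm condition, is the delicate point. By contrast the reverse-time case \eqref{tnls} is the easiest, since there the constraint merely swaps the two blocks up to the involution $T^2=\delta I$ and no reflection or $A^{\pm m}$ factor is needed.
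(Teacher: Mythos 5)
Your proposal is correct and follows essentially the same route as the paper: the same constraints $\Psi_n=T\Phi_{-n}^*$, $T\Phi_n(-t)$, $T\Phi_{-n}(-t)$, the same matching of \eqref{phi-psi-A} to fix the commutation and normalisation conditions on $A,T$, the same $C\to A^{-m}C$, $D\to A^{m}D$ re-centring for the two reflection cases, and the same determinantal chain yielding $f_n\propto f_{-n}^*$ (resp.\ $f_n(-t)$, $f_{-n}(-t)$) and $h_n\propto g_{-n}^*$ (resp.\ $g_n(-t)$, $g_{-n}(-t)$), hence $R_n=\delta Q^{\mathrm{transform}}_n$. You also correctly single out the column-reordering bookkeeping in the reflected cases as the only delicate point, which is exactly where the paper's explicit computation does its work.
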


\begin{proof}

The proof is similar to the classical case.
For the nonlocal discrete NLS equation \eqref{snls} with reverse space,
since $C$ and $D$ in \eqref{phi-psi-A} are arbitrary,
we first replace $C\to A^{-m} C,~ D\to A^{m}D$ and start from
\begin{equation}\label{snls-fgh}
f_n=|\h {A^{-m}\Phi_n^{(m)}};\h {A^{m}\Psi_{n}^{(m)}}|,  ~~
g_n=|\h {A^{-m}\Phi_n^{(m+1)}};\h {A^{m}\Psi_{n}^{(m-1)}}|,  ~~
h_n= -|\h {A^{-m}\Phi_n^{(m-1)}};\h {A^{m}\Psi_{n}^{(m+1)}}|,
\end{equation}
which are still solution to \eqref{nls-bi}. Reduction is implemented by taking
\begin{equation}\label{snls-cons}
 \Psi_n = T\Phi^*_{-n},~~ D=TC^*,
\end{equation}
and requiring  \eqref{the-snls-AT} holds.
Then  we have
\begin{equation}
f_n = |A^{-m}\Phi_n,A^{-m+2}\Phi_n,\cdots,A^{m}\Phi_n;A^{m}T\Phi^*_{-n},A^{m-2}T\Phi^*_{-n},\cdots,A^{-m}T\Phi_{-n}^*|.
\label{fgh-am}
\end{equation}
Making use of \eqref{the-snls-AT} we find
\begin{align*}
f_n =&|A^{-m}\Phi_n,A^{-m+2}\Phi_n,\cdots,A^{m}\Phi_n;T {A^{*}}^{m}\Phi^*_{-n},T {A^{*}}^{m-2}\Phi^*_{-n},\cdots,T {A^*}^{-m}\Phi_{-n}^*|\\
=&|T|(-\delta|A^*|^{-2})^{m+1}| T^*A^{-m}\Phi_n, T^*A^{-m+2}\Phi_n,\cdots,T^*A^{m}\Phi_n;
                                 {A^{*}}^{m}\Phi^*_{-n},{A^{*}}^{m-2}\Phi^*_{-n},\cdots, {A^*}^{-m}\Phi_{-n}^*|   \\
=&(\delta|A^*|^{-2})^{m+1}|T||A^m\Phi_{-n},A^{m-2}\Phi_{-n},\cdots,A^{-m}\Phi_{-n};
                                 T {A^*}^{-m}\Phi^*_{n},T {A^*}^{-m+2}\Phi^*_{n},\cdots,T {A^{*}}^{m}\Phi_{n}^*|^* \\
=&(\delta|A^*|^{-2})^{m+1}|T||A^m\Phi_{-n},A^{m-2}\Phi_{-n},\cdots,A^{-m}\Phi_{-n};
                                 {A}^{-m}T\Phi^*_{n},{A}^{-m+2}T\Phi^*_{n},\cdots,{A}^{m}T\Phi_{n}^*|^* \\
=&(\delta|A^*|^{-2})^{m+1}|T||A^{-m}\Phi_{-n},A^{-m+2}\Phi_{-n},\cdots,A^m\Phi_{-n};
                                 {A}^{m}T\Phi_{n}^*,{A}^{m-2}T\Phi^*_{n},\cdots,{A}^{-m}T\Phi^*_{n}|^* \\
=&(\delta|A^*|^{-2})^{m+1}|T|f_{-n}^*.
\end{align*}
Similarly, we have $h_n=\delta^{m}|A^*|^{-2(m+1)}|T|g_{-n}^*$.
By these relations we immediately reach
${R_n}/{Q_{-n}^*}=\delta$
and then the solution expression \eqref{the-snls-Q}.

For the nonlocal discrete NLS equation \eqref{tnls} with reverse time, we start from
usual forms $f_n=|\h {\Phi_n^{(m)}};\h {\Psi_n^{(m)}}|,~ g_n=|\h {\Phi_n^{(m+1)}};\h {\Psi_n^{(m-1)}}|,~
h_n=-|\h {\Phi_n^{(m-1)}};\h {\Psi_n^{(m+1)}}|$ and reduction is implemented by taking
\begin{equation}\label{tnls-cons}
 \Psi_n = T\Phi_n(-t),~~D=TC
\end{equation}
together with \eqref{the-tnls-AT}.
In this case we have relation
\begin{equation}
f_n=(-\delta)^{m+1}|T|f_n(-t),~~
h_n=-(-\delta)^{m}|T|g_n(-t).
\end{equation}

For the nonlocal discrete NLS equation \eqref{tsnls} with reverse space and time, we need to start from \eqref{fgh-am}
and reduction is implemented by taking
\begin{equation}\label{tsnls-cons}
 \Psi_n = T\Phi_{-n}(-t),~~D=TC
\end{equation}
together with \eqref{the-tsnls-AT}.
Relations between Casorations are
\begin{equation}
f_n =(\delta|A|^{-2})^{m+1}|T|f_{-n}(-t), ~~
h_n = \delta^{m}|A|^{-2(m+1)}|T|g_{-n}(-t).
\end{equation}

\end{proof}

\subsection{Solutions to $B$ and $\Phi_n$}\label{sec-3-3}

In Theorem \ref{the-nls-cla} and \ref{the-snls} we have given constraints on $B$ and $T$.
In the following we present solutions of these constraint equations together with explicit expressions of $\Phi_n$.
For convenience, let us first introduce some notations.
Suppose $B$ and $T_i(i=1,2,3)$ are $2\times 2$ block matrices
\begin{equation*}
B_1= \left(
      \begin{array}{cc}
        K & \mathbf{0} \\
        \mathbf{ 0}& H
      \end{array}
    \right),~~
    T_1= \left(
      \begin{array}{cc}
         \mathbf{0} & I_{m+1} \\
         -I_{m+1} &  \mathbf{0}
      \end{array}
    \right),~~  T_2= \left(
      \begin{array}{cc}
         \mathbf{0} & I_{m+1} \\
         I_{m+1} &  \mathbf{0}
      \end{array}
    \right),~~    T_3= \left(
      \begin{array}{cc}
          I_{m+1}&\mathbf{0} \\
         \mathbf{0}&-I_{m+1}
      \end{array}
    \right),
\end{equation*}
in which each block is a constant $(m+1)\times (m+1)$ matrix; $J(k)$ is a $(m+1)\times (m+1)$ Jordan matrix w.r.t. $k\in \mathbb{C}$,
\begin{equation}
J(k)=\left(
      \begin{array}{cccc}
        k & 0 & \cdots & 0  \\
        1& k& \cdots& 0 \\
        \vdots &\ddots&\ddots& \vdots\\
        0 & \cdots &1&k
      \end{array}\right).
\end{equation}
Then, for the constraint equations  \eqref{the-nls-BT}, \eqref{the-snls-BT}, \eqref{the-tnls-BT} and \eqref{the-tsnls-BT},
their solutions are respectively given by
\begin{align}
& B=B_1~\mathrm{with} ~ H=-K^*,~~ T=T_1 (\mathrm{for} ~ \delta=-1)~\mathrm{or}~T_2 (\mathrm{for} ~ \delta=1)\label{BT-nls}\\
& B=B_1~\mathrm{with} ~ H=K^*,~~ T=|e^{B}| T_2 (\mathrm{for} ~ \delta=-1)~\mathrm{or}~|e^{B}| T_1 (\mathrm{for} ~ \delta=1),\label{BT-snls}\\
& B=B_1~\mathrm{with} ~ H=-K,~~ T=\sqrt{\delta} T_1,\label{BT-tnls}\\
& B=B_1~\mathrm{with} ~ H=-K^*(\mathrm{or}~K^*~\mathrm{or} -K),~~ T=\sqrt{-\delta}|e^{B}| T_3. \label{BT-tsnls}
\end{align}

Due to Theorem \ref{the3} we only need to consider the canonical forms of $B$.
That is, $K$ can either be
\begin{equation}
K=\mathrm{Diag}(k_1, k_2, \cdots, k_{m+1}),~~ k_i\neq 0 \in \mathbb{C},
\label{K}
\end{equation}
or $K=J(k),~ k\neq 0 \in \mathbb{C}$.
When $K$ is diagonal, $\Phi_n$ is composed by
\begin{equation}
\phi_{j,n}=
\left\{\begin{array}{ll}
       e^{k_j n -\frac{i}{2}(e^{2k_j}-2+e^{-2k_j})t}, & j=0, 1, \cdots m,\\
       e^{h_s n -\frac{i}{2}(e^{2h_s}-2+e^{-2h_s})t}, & j=m+1+s,~ s=0, 1, \cdots m,
       \end{array}\right.
\end{equation}
where $h_s=-k_s^*$ for \eqref{nls-cla},  $h_s=k_s^*$ for \eqref{snls},
$h_s=-k_s$ for \eqref{tnls} and   $h_s=-k_s^*(\mathrm{or}~ k_s^*~\mathrm{or}\, -k_s)$ for \eqref{tsnls}.
When $K$ takes the Jordan block $J(k)$,  $\Phi_n$ is composed by
\begin{equation}
\phi_{j,n}=
\left\{\begin{array}{ll}
       \frac{\partial^j_k}{j!} e^{k n -\frac{i}{2}(e^{2k}-2+e^{-2k})t}, & j=0, 1, \cdots m,\\
       \frac{\partial^s_h}{s!} e^{h n -\frac{i}{2}(e^{2h}-2+e^{-2h})t}, & j=m+1+s,~ s=0, 1, \cdots m,
       \end{array}\right.
\end{equation}
where $h_s=-k_s^*$ for \eqref{nls-cla},  $h_s=k_s^*$ for \eqref{snls},
$h_s=-k_s$ for \eqref{tnls} and   $h_s=-k_s^*(\mathrm{or}~ k_s^*~\mathrm{or} -k_s)$ for \eqref{tsnls}.

\subsection{Examples: One-soliton solutions}

As examples we list out  one-soliton solutions we obtained for equations \eqref{nls-cla}-\eqref{tsnls}.

For the classical discrete NLS equation \eqref{nls-cla}, its one-soliton is
\begin{eqnarray}
Q_n=\frac{e^{2k_1}-e^{-2k_1^*}}{e^{-2k_1n+2i\xi_1t}-\delta e^{2k_1^*n+2i\xi^*_1t}};
\end{eqnarray}
for the nonlocal (reverse space) discrete NLS equation \eqref{snls}, we have
\begin{eqnarray}
Q_n=\frac{e^{2k_1}-e^{-2k_1^*}}{e^{k_1+k_1^*}(e^{-2k_1n+2i\xi_1t}+\delta e^{-2k_1^*n+2i\xi^*_1t})};
\end{eqnarray}
for the nonlocal (reverse time) discrete NLS equation \eqref{tnls}, we have
\begin{eqnarray}
Q_n=\frac1{\sqrt{\delta}}\frac{(e^{2k}-e^{-2k})e^{-2i\xi_1t}}{e^{2k_1n}+e^{-2k_1n}};
\end{eqnarray}
for the nonlocal (reverse space-time) discrete NLS equation \eqref{tsnls}, its one-soliton solution can be either of the following,
\begin{subequations}
\begin{eqnarray}
Q_n&=&\frac1{\sqrt{-\delta}}\frac{e^{2k_1}-e^{2k_1^*}}{e^{k_1+k_1^*}(e^{-2k_1^*n+2i\xi_1^*t}+e^{-2k_1n+2i\xi_1t})},\\
Q_n&=&\frac1{\sqrt{-\delta}}\frac{e^{2k_1}-e^{-2k_1}}{(e^{2k_1n}+e^{-2k_1n})e^{2i\xi_1t}},\\
Q_n&=&\frac1{\sqrt{-\delta}}\frac{e^{2k_1}-e^{-2k_1^*}}{e^{k_1-k_1^*}(e^{2k_1^*n+2i\xi_1^*t}+e^{-2k_1n+2i\xi_1t})}.
\end{eqnarray}
\end{subequations}
In the above, $\xi_1=e^{2k_1}-2+e^{-2k_1}$ and $k_1\in \mathbb{C}$.

\section{Conclusion}\label{sec-4}

In this paper we have shown a reduction approach to construct solutions of the classical and nonlocal discrete NLS equations.
In this approach, the first step is to bilinearise the before-reduction system \eqref{nls-bef} and derive its
double Casoratian solutions.
In the second step, by imposing suitable reductions on the Casoratian entries,
one can get relations between $f_n(t)$ and $f^*_{n}(t)$, $f^*_{-n}(t)$, $f_{n}(-t)$ and $f_{-n}(-t)$,
and between $h_n(t)$ and $g^*_{n}(t)$, $g^*_{-n}(t)$, $g_{n}(-t)$ and $g_{-n}(-t)$.
These relations lead to  classical and nonlocal reductions between
$R_n$ and $Q_n$ via the transformation \eqref{trans}.

With regard to the method described in the paper,  bilinearisation is not the only way to solve the before-reduction system.
Obviously, Inverse Scattering Transform and Darboux transformation are the alternative.
We believe the reduction approach is universal for getting solutions of those nonlocal integrable systems.
It is also simpler than the treatment directly working on the after-reduction systems (cf. \cite{zhu-2016,XuTao-2016}).
Dynamics of the obtained variety of solutions and solutions of  these classical and nonlocal discrete NLS  hierarchies
would be investigated in the future.
Besides, it would be interesting to consider possible transformations between classical and nonlocal discrete NLS equations
(cf. for continuous case \cite{YY-arxiv-2016}).

\subsection*{Acknowledgments}
This work was supported by  the NSF of China [grant numbers 11371241, 11435005, 11631007].

\begin{appendix}
\section{Proof of Theorem \ref{the3}}

To prove the theorem, the following identity will be used \cite{FreN1983},
\begin{equation}
|M,a,b||M,c,d|-|M,a,c||M,b,d|+|M,a,d||M,b,c|=0,
\label{id-1}
\end{equation}
where $M$ is a $N\times(N-2)$ matrix and $a,~b,~c,~d$ are $N$-th order column vectors.
Besides, some identities for double Wronskians listed in \cite{Liu1990,YSCC2008} will be also used.

A direct calculation yeilds
\begin{align*}
&f_{n+1}=|A||\h m; \t{p+1}|=|A|^{-1}|\t{m+1};\h p|,~~
 f_{n-1}=|A|^{-1}|\h m;-1, \h{p-1}|,\\
&g_{n+1}=|A||\h{m+1};\t p|,~~
 g_{n-1}=|A||-1,\h{m};\h {p-1}|,\\
&2if_{n,t}=(|\h{m-1},m+1;\h p|-|\h m;-1,\t p|)+(|-1,\t{m};\h p|-|\h m;\h {p-1},p+1|)+2(p-m)f_n,\\
&2ig_{n,t}=(|\h{m},m+2;\h {p-1}|-|\h {m+1};-1,\t {p-1}|) \\
& ~~~~~~~~~~~ +(|-1,\t{m+1};\h {p-1}|-|\h {m+1};\h {p-2},p|)+2(p-m-2)g_n.
\end{align*}
where $|\t {m+1};\h p|=|1,2,\cdots,m,m+1; 0,1,\cdots,p|.$
Substituting them into \eqref{nls-bi-1} one obtains
\begin{align*}
&2(ig_{n,t}f_n-if_{n,t}g_n-g_{n-1}f_{n+1}-g_{n+1}f_{n-1}+2g_{n}f_{n})\\
=&|A|^{-2}(|\h{m},m+2;\h {p-1}||\t {m+1};-1,\h {p-1}|-|\t{m},m+2;-1,\h {p-1}|
|\h {m+1}; \h {p-1}|\\
&-|\t{m+2};\h {p-1}||\h m;-1, \h{p-1}|)\\
&-|A|^{2}(|-1,\h {m};\t {p-1},p+1||\h m;\h p|-
|\h m;\h {p-1},p+1||-1,\h {m};\t {p}|+|\h{m};\t {p+1}||-1,\h{m};\h {p-1}|)\\
=&0,
\end{align*}
in which \eqref{id-1} has been used. Eq.\eqref{nls-bi-2} can  be verified in a similar way.
For \eqref{nls-bi-3} we have
\begin{eqnarray*}
&&f_n^2-g_nh_n-f_{n+1}f_{n-1}\\
&=&|A|^{-2}(|\h m; \h p||\t {m+1};-1,\h {p-1}|+|\h {m+1}; \h {p-1}||\t {m};-1,\h {p}|
-|\t {m+1}; \h{p}||\h {m};-1, \h{p-1}|),
\end{eqnarray*}
which vanishes by means of  \eqref{id-1}.

Suppose that $A=V^{-1}\Gamma V$, i.e. $A$ is similar to $\Gamma$. We introduce $\Phi_n'=V\Phi_n,~ \Psi_n'=V\Psi_n$
which satisfy \eqref{nls-phi-psi} with $\Gamma$ in place of $A$.
Then we have
$f_n(\Phi_n',\Psi_n')=|V|f_n(\Phi_n,\Psi_n)$, $g_n(\Phi_n',\Psi_n')=|V|g_n(\Phi_n,\Psi_n)$ and
$h_n(\Phi_n',\Psi_n')=|V|h_n(\Phi_n,\Psi_n)$,
which means $A$ and its similar form $\Gamma$ generate same $Q_n$ and $R_n$.

\end{appendix}

{\small

}

\end{document}